\documentclass[prl,twocolumn,twoside,superscriptaddress]{revtex4}

\usepackage{graphicx,epic,eepic,epsfig,subfigure,amsmath,latexsym,amssymb,verbatim,color,bbm}

\def\squareforqed{\hbox{\rlap{$\sqcap$}$\sqcup$}}
\def\qed{\ifmmode\squareforqed\else{\unskip\nobreak\hfil
\penalty50\hskip1em\null\nobreak\hfil\squareforqed
\parfillskip=0pt\finalhyphendemerits=0\endgraf}\fi}
\def\endenv{\ifmmode\;\else{\unskip\nobreak\hfil
\penalty50\hskip1em\null\nobreak\hfil\;
\parfillskip=0pt\finalhyphendemerits=0\endgraf}\fi}

\newtheorem{theorem}{Theorem}

\newtheorem{corollary}[theorem]{Corollary}

\newtheorem{definition}[theorem]{Definition}

\newtheorem{lemma}[theorem]{Lemma}

\newenvironment{proof}[1][Proof]{\noindent\textbf{Proof.} }{\hfill\qed}
\newenvironment{proof-of}[1][Proof]{\noindent\textbf{Proof~#1.} }{\hfill\qed}

\newcommand{\nc}{\newcommand}
\nc{\rnc}{\renewcommand}
\nc{\bra}[1]{\langle#1|}
\nc{\ket}[1]{|#1\rangle}
\nc{\ketbra}[2]{|#1\rangle\!\langle #2|}
\nc{\braket}[2]{\langle #1 | #2  \rangle}
\nc{\proj}[1]{| #1\rangle\!\langle #1 |}
\nc{\avg}[1]{\langle#1\rangle}
\nc{\sfrac}[2]{\mbox{$\frac{#1}{#2}$}}
\nc{\ox}{\otimes}
\nc{\dg}{\dagger}
\nc{\lbar}[1]{\overline{#1}}
\nc{\rar}{\rightarrow}
\nc{\dn}{\downarrow}
\nc{\lrar}{\longrightarrow}
\nc{\tr}{\operatorname{Tr}}
\nc{\var}{\operatorname{var}}
\nc{\Rank}{\operatorname{Rank}}
\nc{\polylog}{\operatorname{polylog}}
\nc{\id}{{\operatorname{id}}}
\nc{\1}{{\openone}}
\nc{\di}{\mathrm{d}}

\nc{\cA}{\mathcal{A}}  \nc{\cB}{\mathcal{B}}  \nc{\cC}{\mathcal{C}}
\nc{\cD}{\mathcal{D}}  \nc{\cE}{\mathcal{E}}  \nc{\cF}{\mathcal{F}}
\nc{\cG}{\mathcal{G}}  \nc{\cH}{\mathcal{H}}  \nc{\cI}{\mathcal{I}}
\nc{\cJ}{\mathcal{J}}  \nc{\cK}{\mathcal{K}}  \nc{\cL}{\mathcal{L}}
\nc{\cM}{\mathcal{M}}  \nc{\cN}{\mathcal{N}}  \nc{\cO}{\mathcal{O}}
\nc{\cP}{\mathcal{P}}  \nc{\cQ}{\mathcal{Q}}  \nc{\cS}{\mathcal{S}}
\nc{\cT}{\mathcal{T}}  \nc{\cX}{\mathcal{X}}  \nc{\cZ}{\mathcal{Z}}

\nc{\RR}{{{\mathbb R}}} \nc{\CC}{{{\mathbb C}}} \nc{\FF}{{{\mathbb F}}}
\nc{\NN}{{{\mathbb N}}} \nc{\ZZ}{{{\mathbb Z}}} \nc{\PP}{{{\mathbb P}}}
\nc{\QQ}{{{\mathbb Q}}} \nc{\UU}{{{\mathbb U}}} \nc{\EE}{{{\mathbb E}}}

%%%%% Below are added for this document only %%%%%
           \def\sep{\mathinner{\mathrm{SEP}}}
\nc{\LO}{\mathsf{LO}}            \nc{\LOCCONE}{\mathsf{LOCC_1}}	
\nc{\LOCCONEP}{\mathsf{LOCC_1^{\parallel}}}
\nc{\ALL}{{\mathsf{ALL}}}        \nc{\M}{{\mathsf{M}}}
\nc{\NP}{{\mathsf{NP}}}          \nc{\QMA}{{\mathsf{QMA}}}
\nc{\SymQMA}{{\mathsf{SymQMA}}}  \nc{\SAT}{{\mathsf{SAT}}}
%%%%%%%%%%%%%%%% End of addition %%%%%%%%%%%%%%%%%

%%%%%%%%%%%%%%%%%%%%%%%%%%%%%%%%%%%%%%%%%%%%%%%%%%%%%%%%%%%%%%%%%%%%%%%%%%%%%%%%%%
\begin{document}

\title{Quantum de Finetti theorem under fully-one-way adaptive measurements}
\author{Ke Li}
  \email{carl.ke.lee@gmail.com}
  \affiliation{IBM TJ Watson Research Center, Yorktown Heights, NY 10598, USA}
  \affiliation{Center for Theoretic Physics,  Massachusetts Institute of Technology, Cambridge, MA 02139, USA}
\author{Graeme Smith}
  \email{gsbsmith@gmail.com}
  \affiliation{IBM TJ Watson Research Center, Yorktown Heights, NY 10598, USA}

\date{\today}

\begin{abstract}
  We prove a version of the quantum de Finetti theorem:  permutation-invariant quantum states
  are well approximated as a probabilistic mixture of multi-fold product states. The
  approximation is measured by distinguishability under fully one-way LOCC (local operations
  and classical communication) measurements. Our result strengthens Brand\~{a}o and Harrow's
  de Finetti theorem where a kind of partially one-way LOCC measurements was used for measuring
  the approximation, with essentially the same error bound.  As main applications, we show (i)
  a quasipolynomial-time algorithm which detects multipartite entanglement with amount larger
  than an arbitrarily small constant (measured with a variant of the relative entropy of
  entanglement), and (ii) a proof that in quantum Merlin-Arthur proof systems, polynomially
  many provers are not more powerful than a single prover when the verifier is restricted to
  one-way LOCC operations.
\end{abstract}

\maketitle
Consider random variables $X_1,...,X_n$ representing the color of a sequence of balls drawn
without replacement from a bag of $100$ red balls and $100$ blue balls. These variables are
not independent, since the probability of withdrawing a red ball on the $k$th withdrawl depends
on the number of balls of each color remaining. They are, however, \emph{exchangeable}: the
probability of removing a particular sequence of balls $(x_1,...,x_n)$ is equal to the probability
of removing any reordering of that sequence $(x_{\pi(1)},...,x_{\pi(n)})$ for permuatation $\pi$.
Remarkably, the de Finetti theorem tells us that any such exchangeable random variables can be
represented by independent and identically distributed ones~\cite{deFinetti37, Diaconis-Freedman80},
yeilding a profound result in probability theory and a  powerful tool in statistics.

A series of works have established analogues of this theorem in the quantum domain~\cite{Stormer69,
Hudson-Moody76, Raggio-Werner89, CFS02, Konig-Renner05, CKMR07, Renner07, Brandao-Harrow12}, where
a classical probability distribution is replaced by a quantum state and the situation is more
complicated and interesting, due to entanglement and the existence of many different ways to
distinguish states of multipartite systems. These quantum de Finetti theorems are appealing not
only due to their own elegance on the characterization of symmetric states, but also because of the
successful applications in many-body physics~\cite{Raggio-Werner89, Fannes-Vandenplas06, LNR13},
quantum information~\cite{Renner07, Brandao-Plenio-10, Christandl-Renner12}, and computational
complexity theory~\cite{BSW11, BCY11, Brandao-Harrow12}.

More precisely, a quantum de Finetti theorem concerns the structure of a \emph{symmetric} state
$\rho_{A_1\ldots A_n}$ that is invariant under any permutations over the subsystems~\cite{note0}.
It tells how the reduced state $\rho_{A_1\ldots A_k}$ on a smaller number $k<n$ of subsystems could
be approximated by a mixture of $k$-fold product states, namely, \emph{de Finetti states} of the
form $\int\sigma^{\ox k}\,\di\mu(\sigma)$. Here $\mu$ is a probability measure over density matrices.
Using the conventional distance measure, trace norm, Ref.~\cite{CKMR07} proved a standard de
Finetti theorem with an essentially optimal error bound $2|A|^2k/n$ for the approximation ($|A|$
denotes the dimension of the subsystems). However, in many situations this bound is too large
to be applicable. Luckily it is possible to circumvent
this obstruction. For example, Renner's exponential de Finetti theorem employs the ``almost'' de
Finetti states and has an error bound that decreases exponentially in $n-k$~\cite{Renner07},
being very useful in dealing with cryptography or information theory problems~\cite{Renner07,
Brandao-Plenio-10, Christandl-Renner12}.

In a beautiful work~\cite{Brandao-Harrow12} Brand\~{a}o and Harrow recently proved an LOCC (local
operations and classical communication) de
Finetti theorem, generalizing a similar result for the case $k=2$~\cite{BCY11}.
Both \cite{Brandao-Harrow12} and \cite{BCY11} have overcome the limitation of the standard de Finetti
theorem regarding the dimension dependence. The basic idea is to relax the measure of approximation
by replacing the trace norm with a kind of one-way LOCC norm. This gives an error bound
$\sqrt{\frac{2k^2\ln|A|}{n-k}}$~\cite{note1}, scaling polynomially in $\ln|A|$ instead of polynomially in
$|A|$ as in earlier de Finetti results, which is crucial to the complexity-theoretic applications.

\begin{figure}
\begin{subfigure}[]{}
\includegraphics[width=0.9in]{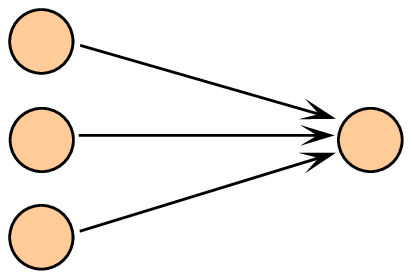}
\end{subfigure}
\hspace{0.35in}
\begin{subfigure}[]{}
\includegraphics[width=1.25in]{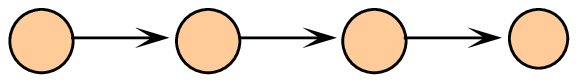}
\end{subfigure}
 \caption{Parallel vs. fully one-way LOCC. (a) $\LOCCONEP$: Parallel one-way LOCC measurements
 used in~\cite{Brandao-Harrow12}. Here the first $k-1$ parties make measurements in parallel and
 report their outcomes to the $k$th, who then makes a measurement that depends on the messages
 he receives. (b) $\LOCCONE$: Fully one-way LOCC measurements. We adopt a more complete
 generalization of one-way LOCC: all the parties measure their own systems sequentially, but in
 a fully adaptive way where each party chooses his own measurement setting depending on the
 outcomes of all the previous measurements performed by the other parties.}
 \label{fig:Parallel-vs-Sequential}
\end{figure}

While \cite{Brandao-Harrow12} showed approximation in the \emph{parallel} one-way LOCC norm
associated with the measurement class $\LOCCONEP$, here we prove a de Finetti theorem where
the approximation is measured with the \emph{fully} one-way LOCC norm (or relative entropy)
associated with $\LOCCONE$ (cf. Fig.~\ref{fig:Parallel-vs-Sequential}).
The error bound remains essentially the same as that of~\cite{Brandao-Harrow12}. This improves
Brand\~{a}o and Harrow's LOCC de Finetti theorem considerably: it is conceptually more complete
and when applied to the problems considered in~\cite{BCY11, Brandao-Christandl11, Brandao-Harrow12}
gives new and improved results. For the problem of entanglement detection, so central to
quantum information theory and experiment, we present strong guarantees for the effectiveness
of the well-known heirarchy of entanglement tests of \cite{DPS}. We also consider the power
of multiple-prover quantum Merlin Arthur games, which bears directly on the problems of
pure-state vs mixed-state $N$-representability~\cite{LCV07} as well as the entanglement properties
of sparse hamiltonian's ground states~\cite{Chailloux-Sattath11}.

\medskip\noindent
{\bf Operational norms as distance measures.}
We identify every positive operator-valued measure $\{M_x\}_x$ with a measurement operation
$\cM$: for any state $\omega$, $\cM(\omega):=\sum_x\proj{x}\tr(\omega M_x)$ with $\{\ket{x}\}_x$
an orthonormal basis. For simplicity we call them both quantum measurement.
Given a class of measurements $\M$, the operational norm is defined as~\cite{MWW09}
\[\| \rho -\sigma \|_\M = \max_{\cM \in \M} \| \cM(\rho)-\cM(\sigma)\|_1.\]
It measures the distinguishability of two quantum states under restricted classes of
measurements. We will be particularly interested in $\| \cdot \|_{\LOCCONE}$ and
$\| \cdot \|_{\LOCCONEP}$. Obviously the former is lower bounded by the latter, since
$\LOCCONEP\subset\LOCCONE$. In fact, these two norms can differ substantially: using a recent
result obtained in~\cite{Aubrun-Lancien14}, we can show for all $d$ there are constant $C$ and
$d\times d\times 2$ states $\rho_{ABC}$ and $\sigma_{ABC}$ such that
$\| \rho_{ABC}-\sigma_{ABC} \|_{\LOCCONE} = 2$ but $\| \rho_{ABC}-\sigma_{ABC} \|_{\LOCCONEP}
\leq C/\sqrt{d}$ (see Appendix).

\medskip\noindent
{\bf Improved LOCC de Finetti theorem.}
Our main result is the following Theorem~\ref{thm:de-Finetti}. Besides the improvement with
the fully one-way LOCC norm, for the first time we employ relative entropy $D(\rho\|\sigma)
=\tr\rho(\log\rho-\log\sigma)$ to measure the approximation, defining $D_\LOCCONE(\rho\|\sigma)
:=\max_{\Lambda\in\LOCCONE}D(\Lambda(\rho)\|\Lambda(\sigma))$.

In the proof, we will use information-theoretic methods similar to \cite{Brandao-Harrow12}, along
with some new ideas. In particular, Lemma~\ref{lemma:multi-to-bi} presented below is a crucial
new technical tool, which may be of independent interest. We employ and manipulate entropic
quantities to derive the final result: apart from relative entropy, the mutual information of a
state $\omega_{AB}$ is defined as $I(A;B):=D(\omega_{AB}\|\omega_A\ox\omega_B)$, and the
conditional mutual information of a state $\omega_{ABC}$ is defined as $I(A;B|C):=I(A;BC)-I(A;C)$.

\begin{theorem}
  \label{thm:de-Finetti}
  Let $\rho_{A_1\ldots A_n}$ be a permutation-invariant state on $\cH_A^{\ox n}$.
  Then for integer $0\leq k\leq n$ there exists a probability measure $\mu$ on density
  matrices on $\cH_A$ such that
  \begin{align}
    \label{eq:de-Finetti-1}
    &D_\LOCCONE\Big(\rho_{A_1\ldots A_k} \big\| \int\sigma^{\ox k}\,\di\mu(\sigma)\Big)\leq
                                                            \frac{(k-1)^2\log |A|}{ n-k}, \\
    \label{eq:de-Finetti-2}
    &\left\|\rho_{A_1\ldots A_k}\!-\!\!\int\!\sigma^{\ox k}\,\di\mu(\sigma)\right\|_{\LOCCONE}
                                                    \!\leq\sqrt{\frac{2(k-1)^2\ln |A|}{ n-k}}.
  \end{align}
\end{theorem}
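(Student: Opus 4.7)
The plan is to first establish (\ref{eq:de-Finetti-1}) and then deduce (\ref{eq:de-Finetti-2}) from it by Pinsker's inequality, since $\|\Lambda(\rho)-\Lambda(\sigma^{(k)})\|_1\le\sqrt{2D(\Lambda(\rho)\|\Lambda(\sigma^{(k)}))}$ holds for each measurement $\Lambda$; maximizing both sides over $\Lambda\in\LOCCONE$ then converts the relative-entropy bound into the norm bound (with the usual base change between $\log$ and $\ln$). Writing $\sigma^{(k)}:=\int\sigma^{\ox k}\,\di\mu(\sigma)$, I would construct $\mu$ by measuring the environment $A_{k+1}\ldots A_n$ with a suitable (ideally informationally complete) auxiliary measurement whose classical outcome $Y$ indexes the single-body posteriors $\sigma_y:=\rho_{A_1|Y=y}$; permutation invariance forces $\rho_{A_j|Y=y}=\sigma_y$ for every $j\le k$, so the candidate de~Finetti state is $\sigma^{(k)}=\EE_Y\sigma_Y^{\ox k}$. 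A minimax or continuity step at the end would upgrade this a priori $\Lambda$-dependent choice into a universal one.

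For any fixed $\Lambda=\Lambda_k\circ\cdots\circ\Lambda_1\in\LOCCONE$ with outcomes $X_1,\ldots,X_k$, the chain rule for classical relative entropy decomposes
\begin{equation*}
D(\Lambda(\rho)\|\Lambda(\sigma^{(k)}))=\sum_{j=1}^k\EE_{x_{<j}\sim p_\rho}\,D(p_\rho(X_j|x_{<j})\,\|\,p_\sigma(X_j|x_{<j})),
\end{equation*}
turning the multipartite distinguishability into a sum of single-site terms on $A_j$. The role of the promised Lemma~\ref{lemma:multi-to-bi} is to upper-bound each single-site term by the conditional mutual information $I(X_j;Y|X_{<j})_\rho$ between the current outcome and the environmental variable given the history---in effect reducing a multipartite adaptive problem to a sequence of bipartite conditional-mutual-information bounds. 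Summing and invoking the chain rule for mutual information then collapses everything into a single quantity: $D(\Lambda(\rho)\|\Lambda(\sigma^{(k)}))\le I(X_1\ldots X_k;Y)_\rho$.

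It remains to bound $I(X_1\ldots X_k;Y)$ uniformly in $\Lambda$. Here I would follow the Brand\~{a}o--Harrow averaging recipe: permutation invariance renders the $n-k$ environmental positions exchangeable, and a randomization over these positions combined with a second use of the chain rule for mutual information delivers the advertised bound $(k-1)^2\log|A|/(n-k)$. The main obstacle is Lemma~\ref{lemma:multi-to-bi} itself. In the parallel setting of~\cite{Brandao-Harrow12} the measurements on $A_1,\ldots,A_{k-1}$ commute and the posterior of $Y$ factorizes step by step, so the bipartite BH machinery applies directly; in the fully-adaptive setting each $\Lambda_j$ depends on $x_{<j}$ and the posterior of $Y$ shifts in a history-dependent manner, and controlling this shifting posterior uniformly across every adaptive strategy is the genuinely new piece that has to be added on top of the BH information-theoretic template. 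Once that lemma is in hand, the chain rules and the exchangeability argument above close the proof of (\ref{eq:de-Finetti-1}), and Pinsker converts it into (\ref{eq:de-Finetti-2}).
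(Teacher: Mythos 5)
Your overall template (generate the ensemble by measuring the environment, decompose the adaptive measurement by a chain rule, close with a monogamy bound) is in the right spirit, and your Pinsker reduction of the norm bound to the relative-entropy bound is exactly the paper's first step. But the chain you propose breaks at its last link, and the part you defer to Lemma~\ref{lemma:multi-to-bi} is not what that lemma can deliver. The quantity $I(X_1\ldots X_k;Y)$ --- the \emph{unconditional} mutual information between the verifier's outcomes and an informationally complete measurement of the environment --- is not $O(k^2\log|A|/(n-k))$ for permutation-invariant states. Take $\rho=\frac{1}{|A|}\sum_i(\proj{i})^{\ox n}$ with computational-basis measurements everywhere: then $I(X_1;Y)=\log|A|$ independently of $n$, even though $\rho_{A_1\ldots A_k}$ is exactly a de Finetti state. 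No averaging over environmental positions can repair this, because the obstruction is plain classical correlation, which exchangeability does not suppress. Relatedly, your intermediate inequality $D(\Lambda(\rho)\|\Lambda(\sigma^{(k)}))\le I(X_1\ldots X_k;Y)$ is not justified: the single-site terms in your chain rule compare $p_\rho(X_j|x_{<j})$ to $p_\sigma(X_j|x_{<j})$, the true conditional against the de Finetti conditional, whereas $I(X_j;Y|X_{<j})$ compares $p_\rho(X_j|x_{<j},y)$ to $p_\rho(X_j|x_{<j})$ --- different second arguments, and the mismatch is precisely the ``shifting posterior'' you flag but do not resolve.

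The paper's proof avoids the shifting posterior altogether by inverting the order of operations. The environment is grouped into $m=\lfloor(n-1)/(k-1)\rfloor$ blocks of $k-1$ systems each; the monogamy Lemma~\ref{lemma:monogamy} (a minimax statement proved with the chain rule for mutual information) supplies one measurement $\cQ^*$ on all but one block such that $I(A;Y|X)\le\log|A|/m$ for \emph{every} measurement on the remaining block. One conditions on the outcome $x$ of $\cQ^*$ \emph{first} and, for each $x$, compares $\Lambda^k(\rho^x_{A_1\ldots A_k})$ to $\Lambda^k(\rho^x_{A_1}\ox\cdots\ox\rho^x_{A_k})$ --- the product of that conditional state's own marginals, so no posterior is ever updated. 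Lemma~\ref{lemma:multi-to-bi} is an exact telescoping identity for this relative entropy under a fully one-way measurement (it involves no $Y$ and no mutual information); each of its $k-1$ terms is then bounded by the single $(k-1)$-versus-$1$ term using monotonicity and the permutation symmetry of $\rho^x$, and that term is exactly what $I(A;Y|X)$ controls. Joint convexity of relative entropy recombines over $x$, and the de Finetti state is $\sum_x p_x(\rho^x_A)^{\ox k}$, manifestly independent of $\Lambda^k$, so no minimax or continuity step is needed at the end. The two factors of $k-1$ in the bound come, respectively, from the block size (which weakens the monogamy bound to $(k-1)\log|A|/(n-k)$) and from the $k-1$ terms of the telescoping sum; these structural choices are absent from your sketch and are where the actual work lies.
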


\begin{proof}
Eq.~(\ref{eq:de-Finetti-2}) follows from Eq.~(\ref{eq:de-Finetti-1}) immediately by using the
Pinsker's inequality~\cite{Fuchs-Graaf99}, $D(\rho\|\sigma)\geq\frac{1}{2\ln2}\|\rho-\sigma\|_1^2$.
So it suffices to prove Eq.~(\ref{eq:de-Finetti-1}).

Group the $n$ subsystems as shown in Fig.~\ref{fig:grouping}: except for one subsystem, the
others are divided into groups of $k-1$ subsystems each (we discard the possibly remaining qubits,
of which there will be fewer than $k-1$). So, we have $m=\lfloor\frac{n-1}{k-1}\rfloor\geq\frac{n-k}{k-1}$
groups. Label the groups as bigger subsystems $B_1, B_2, \ldots, B_m$ and the isolated system
as $A$. Let the $k-1$ subsystems in $B_1$ be $A_1, A_2, \ldots, A_{k-1}$ and the system $A$
is also identified with $A_k$.

\begin{figure}[ht]
  \begin{center}
  \includegraphics[width=7cm]{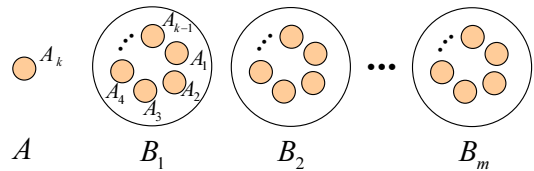}
  \end{center}
  \caption{Grouping and relabeling the $n$ subsystems.}
  \label{fig:grouping}
\end{figure}

Obviously the total state is invariant under permutations over $B_1, B_2, \ldots, B_m$.
So Lemma~\ref{lemma:monogamy} applies. Thus there exists a measurement $\cQ^*: B_2\ldots B_m\rar X$,
such that for any measurement $\cP: B_1\rar Y$ we have
\begin{equation}
  \label{eq:Thproof-1}
  I(A;Y|X)\leq \frac{\log|A|}{m}\leq\frac{(k-1)\log|A|}{n-k}.
\end{equation}
$\cQ^*$ effectively decomposes the state on $AB_1$ into an ensemble. Specifically, we have
$\rho_{AB_1}=\sum_xp_x\rho_{A_1\ldots A_k}^x$, where $p_x$ is the probability of obtaining
the measurement outcome $x$ and $\rho_{A_1\ldots A_k}^x$ is the resulting state on
$A_1\ldots A_k$. Note that since $\rho_{A_1\ldots A_n}$ is permutation-invariant, the
post-measurement states $\rho_{A_1\ldots A_k}^x$ are also permutation-invariant. Now we
rewrite Eq.~(\ref{eq:Thproof-1}) in terms of the relative entropy: for any measurement $\cP$ on
$A_1\ldots A_{k-1}$,
\begin{align}
  &\sum_xp_x D\left(\cP\ox\id^{A_k}(\rho_{A_1\ldots A_k}^x)\big\| \cP
                          (\rho_{A_1\ldots A_{(k-1)}}^x)\ox\rho_{A_k}^x\right) \nonumber \\
  &\leq \frac{(k-1)\log|A|}{n-k}.  \label{eq:Thproof-2}
\end{align}

Pick a one-way LOCC measurement $\Lambda^k$ acting on systems $A_1, \ldots, A_k$ and
denote its reduced measurement on the first $\ell$ systems as $\Lambda^\ell$. Now we
apply Lemma~\ref{lemma:multi-to-bi} to each state $\rho_{A_1\ldots A_k}^x$ and get
\begin{align}
       &D\left(\Lambda^k(\rho_{A_1\ldots A_k}^x) \big\| \Lambda^k(\rho_{A_1}^x
                             \ox\ldots\ox\rho_{A_k}^x)\right)  \label{eq:Thproof-3} \\
  \leq &\sum_{\ell=2}^k D\left(\Lambda^{\ell-1}\ox\id(\rho_{A_1\ldots A_\ell}^x)\big\| \Lambda^{\ell-1}
                             (\rho_{A_1\ldots A_{(\ell-1)}}^x)\ox\rho_{A_\ell}^x\right) \nonumber \\
  \leq &(\!k\!-\!1\!) D\!\left(\!\Lambda^{k-1}\!\ox\!\id(\rho_{A_1\ldots A_k}^x)\big\| \Lambda^{k-1}
                             (\rho_{A_1\ldots A_{(k-1)}}^x)\!\ox\!\rho_{A_k}^x)\!\right), \nonumber
\end{align}
where for the first inequality we have also applied the monotonicity of relative
entropy~\cite{Lindblad-Uhlmann77} and for the second inequality we used the monotonicity of
relative entropy again as well as the symmetry of the state $\rho_{A_1\ldots A_k}^x$.
Combining Eq.~(\ref{eq:Thproof-2}) and Eq.~(\ref{eq:Thproof-3}) we arrive at
\begin{equation}\begin{split}
  \label{eq:Thproof-4}
       &D\Big(\Lambda^k(\rho_{A_1\ldots A_k}) \big\| \Lambda^k(\sum_xp_x \rho_{A_1}^x\ox\ldots\ox\rho_{A_k}^x)\Big) \\
  \leq &\sum_xp_x D\left(\Lambda^k(\rho_{A_1\ldots A_k}^x) \big\| \Lambda^k(\rho_{A_1}^x\ox\ldots\ox\rho_{A_k}^x)\right) \\
  \leq &\frac{(k-1)^2\log|A|}{n-k},
\end{split}\end{equation}
where the first inequality is due to the joint convexity of relative entropy. At this point we
are able to conclude Eq.~(\ref{eq:de-Finetti-1}) from Eq.~(\ref{eq:Thproof-4}), noticing that
$\Lambda^k\in\LOCCONE$ is picked arbitrarily and $\sum_xp_x\rho_{A_1}^x\ox\ldots\ox\rho_{A_k}^x$
is a de Finetti state of the form $\sum_xp_x(\rho_{A}^x)^{\ox k}$ due to the symmetry of
$\rho_{A_1\ldots A_k}^x$.
\end{proof}

\begin{lemma}
  \label{lemma:multi-to-bi}
  Let $\Lambda^k$ be a one-way LOCC measurement on quantum systems $A_1,\ldots, A_k$.
  Denote its reduced measurement corresponding to the first $\ell$ steps on $A_1,\ldots, A_\ell$
  as $\Lambda^\ell$. Then for any state $\rho_{A_1\ldots A_k}$ we have
  \[\begin{split}
     & D\left(\Lambda^k(\rho_{A_1\ldots A_k})\| \Lambda^k(\rho_{A_1}\ox\ldots\ox\rho_{A_k})\right) \\
    =&\sum_{\ell=2}^k D\left(\Lambda^\ell(\rho_{A_1\ldots A_\ell})\| \Lambda^\ell(\rho_{A_1\ldots A_{(\ell-1)}}\ox\rho_{A_\ell})\right).
  \end{split}\]
\end{lemma}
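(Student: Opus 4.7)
The plan is to recast the identity as a chain-rule computation for classical relative entropy between the outcome distributions of the sequential measurements, exploiting the adaptive one-way structure of $\Lambda^k$. Since $\Lambda^k$ outputs a classical string $(y_1,\ldots,y_k)$, both sides of the lemma are ordinary Kullback--Leibler divergences between probability distributions on $k$-tuples, and the equality will emerge from the fact that the two distributions being compared on each side share a common marginal on the first few coordinates.

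Concretely, I would parametrize $\Lambda^k$ by POVM elements $\{M^{y_1\ldots y_{\ell-1}}_{y_\ell}\}$ applied at step $\ell$ to $A_\ell$ conditional on the previous outcomes. Let $p_\ell(y_\ell\mid y_1,\ldots,y_{\ell-1})$ denote the resulting conditional probability of $y_\ell$ when the underlying state is $\rho_{A_1\ldots A_k}$, and $q_\ell(y_\ell\mid y_1,\ldots,y_{\ell-1})=\tr(M^{y_1\ldots y_{\ell-1}}_{y_\ell}\rho_{A_\ell})$ the corresponding conditional when the $\ell$-th register is replaced by its marginal $\rho_{A_\ell}$ independent of the rest. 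Writing $P^{(\ell)}$ for the joint distribution of $(Y_1,\ldots,Y_\ell)$ produced by $\Lambda^\ell(\rho_{A_1\ldots A_\ell})$, two structural observations drive the argument: (i) later adaptive steps cannot affect earlier outcome statistics, so marginalizing $P^{(k)}$ down to its first $\ell-1$ coordinates yields $P^{(\ell-1)}$; and (ii) since $\rho_{A_1\ldots A_{\ell-1}}\otimes\rho_{A_\ell}$ has the same marginal on $A_1\ldots A_{\ell-1}$ as $\rho_{A_1\ldots A_\ell}$, the distribution $\Lambda^\ell(\rho_{A_1\ldots A_{\ell-1}}\otimes\rho_{A_\ell})$ factors as $P^{(\ell-1)}(y_1\ldots y_{\ell-1})\,q_\ell(y_\ell\mid y_1\ldots y_{\ell-1})$, whereas $\Lambda^\ell(\rho_{A_1\ldots A_\ell})$ factors as $P^{(\ell-1)}\,p_\ell$. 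The $\ell$-th summand on the RHS thus reduces to the conditional divergence $\EE_{P^{(\ell-1)}}[D(p_\ell\,\|\,q_\ell)]$, with the marginal piece of the chain rule vanishing because both distributions share the $P^{(\ell-1)}$ prefix.

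On the LHS the two full distributions factorize as $\prod_\ell p_\ell$ and $\prod_\ell q_\ell$ respectively. Iterating the classical chain rule $D(P(X,Y)\,\|\,Q(X,Y))=D(P(X)\,\|\,Q(X))+\EE_{P(X)}[D(P(Y\mid X)\,\|\,Q(Y\mid X))]$ produces $\sum_{\ell=1}^{k}\EE_{P^{(\ell-1)}}[D(p_\ell\,\|\,q_\ell)]$; the $\ell=1$ term vanishes because both distributions arise from applying the same POVM to the same marginal $\rho_{A_1}$, leaving precisely the RHS sum running from $\ell=2$ to $k$. The only real subtlety is bookkeeping---checking that marginalizing over later adaptive outcomes is genuinely harmless and that the POVM descriptions line up between the joint and hybrid-product pictures. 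No inequalities or deeper information-theoretic identities are needed, which is consistent with the statement being an exact equality rather than a bound.
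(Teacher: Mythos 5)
Your proposal is correct and is essentially the paper's own argument: the paper also exploits that $\Lambda^k$ outputs a classical string, writes the adaptive measurement as $\Lambda^{\ell-1}$ followed by an outcome-dependent POVM on $A_\ell$, and verifies by direct computation exactly the chain-rule identities you describe (that both sides reduce to $\sum_{\ell=2}^k \EE_{P^{(\ell-1)}}[D(p_\ell\|q_\ell)]$, with the prefix marginals cancelling). The only cosmetic difference is that the paper organizes the computation as a one-step recursion applied repeatedly, whereas you factorize the full $k$-fold distribution at once.
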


\begin{proof}
It suffices to show
\begin{equation}\begin{split}
  \label{eq:lemm-mtb-proof-1}
   &D\left(\Lambda^k(\rho_{A_1\ldots A_k})\| \Lambda^k(\rho_{A_1}\ox\ldots\ox\rho_{A_k})\right) \\
  =&D\left(\Lambda^{k-1}(\rho_{A_1\ldots A_{k-1}}) \| \Lambda^{k-1}(\rho_{A_1}\ox\ldots\ox\rho_{A_{k-1}})\right) \\
   &+D\left(\Lambda^k(\rho_{A_1\ldots A_k})\| \Lambda^k(\rho_{A_1\ldots A_{k-1}}\ox\rho_{A_k})\right),
\end{split}\end{equation}
because applying this relation recursively allows us to obtain the equation claimed in
Lemma~\ref{lemma:multi-to-bi}.
Write $\Lambda^{k-1}(\rho_{A_1\ldots A_{k-1}})=\sum_xp_x\proj{x}$ and $\Lambda^{k-1}(\rho_{A_1}\ox\ldots\ox\rho_{A_{k-1}})=\sum_xq_x\proj{x}$. Let $\Lambda^k$ be
realized as follows. We first apply $\Lambda^{k-1}$ on $A_1, \ldots, A_{k-1}$. Then depending on
the measurement outcome $x$ we apply a measurement $\cM_x$ on $A_k$. Thus we can write
\begin{align*}
  \Lambda^k(\rho_{A_1\ldots A_k})&=\sum_xp_x\proj{x} \ox \cM_x(\rho_{A_k}^x),   \\
  \Lambda^k(\rho_{A_1\ldots A_{k-1}}\ox\rho_{A_k})&=\sum_xp_x\proj{x} \ox \cM_x(\rho_{A_k}), \\
  \Lambda^k(\rho_{A_1}\ox\ldots\ox\rho_{A_k})&=\sum_xq_x\proj{x} \ox \cM_x(\rho_{A_k}),
\end{align*}
where $\rho_{A_k}^x$ is the state of $A_k$ when $\Lambda^{k-1}$ is applied on $\rho_{A_1\ldots A_k}$
and outcome $x$ is obtained. With these, we can confirm by direct computation that
\begin{equation}\begin{split}
  \label{eq:lemm-mtb-proof-2}
   &D\left(\Lambda^k(\rho_{A_1\ldots A_k})\| \Lambda^k(\rho_{A_1}\ox\ldots\ox\rho_{A_k})\right) \\
  =&D\left(\Lambda^{k-1}(\rho_{A_1\ldots A_{k-1}}) \| \Lambda^{k-1}(\rho_{A_1}\ox\ldots\ox\rho_{A_{k-1}})\right) \\
   &+\sum_xp_x D\left(\cM_x(\rho_{A_k}^x) \|  \cM_x(\rho_{A_k})\right)
\end{split}\end{equation}
and
\begin{equation}\begin{split}
  \label{eq:lemm-mtb-proof-3}
   &D\left(\Lambda^k(\rho_{A_1\ldots A_k})\| \Lambda^k(\rho_{A_1\ldots A_{k-1}}\ox\rho_{A_k})\right)\qquad\quad\; \\
  =&\sum_xp_x D\left(\cM_x(\rho_{A_k}^x) \| \cM_x(\rho_{A_k})\right).
\end{split}\end{equation}
Eq.~(\ref{eq:lemm-mtb-proof-2}) and Eq.~(\ref{eq:lemm-mtb-proof-3}) together lead to
Eq.~(\ref{eq:lemm-mtb-proof-1}) and this concludes the proof.
\end{proof}

\noindent
\emph{Remark.} The quantity $D\left(\rho_{A_1\ldots A_k}\|\rho_{A_1}\ox\ldots\ox\rho_{A_k}\right)$
is sometimes denoted as $I(A_1;A_2;\ldots;A_k)_\rho$ and called the multipartite mutual information.
It is easy to see that $I(A_1;\ldots;A_k)=I(A_1\ldots A_\ell;A_{\ell+1}\ldots
A_k)+I(A_1;\ldots;A_\ell)+I(A_{\ell+1};\ldots;A_k)$. Using this repeatedly we can write the
multipartite mutual information as a sum of bipartite mutual information quantities. This
decomposition can be done in many different ways depending on how we split the subsystems.
Lemma~\ref{lemma:multi-to-bi} is a similar result. However, with the one-way LOCC measurement
$\Lambda^k$, the decomposition only works for our special choice of splitting.

The following lemma, which is a statement of the monogamy of entanglement, is adapted
from~\cite{Brandao-Harrow12}. For completeness we give a proof in the Appendix.

\begin{lemma}
  \label{lemma:monogamy}
  Let $\rho_{AB_1\ldots B_m}$ be a state that is invariant under any permutation over $B_1, B_2,
  \ldots, B_m$. Let $\cP^{B_1\rar Y}$ and $\cQ^{B_2\ldots B_m\rar X}$ be measurement operations
  performed on systems $B_1$ and $B_2\ldots B_m$, respectively. We have
  \[\min_\cQ\max_\cP I(A;Y|X)_{\id^A\ox\cP\ox\cQ(\rho_{AB_1\ldots B_m})}\leq\frac{\log|A|}{m}.\]
\end{lemma}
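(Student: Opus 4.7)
The plan is a monogamy-of-information argument that exploits the chain rule and the permutation symmetry of $\rho_{AB_1\ldots B_m}$. The underlying budget is that once all of the $B_i$'s are measured to classical outcomes, the resulting classical-quantum state satisfies $I(A;XY)\le H(A)\le\log|A|$; the argument will distribute this $\log|A|$ across $m$ symmetric conditional terms produced by the chain rule.

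I would first establish the easier max-min bound. Given any POVM $\cP$ on $B_1$, construct a (deliberately $\cP$-dependent) measurement $\widetilde\cQ_\cP$ on $B_2\ldots B_m$ as follows: apply the same POVM $\cP$ to each $B_i$ for $i=2,\ldots,m$, obtaining outcomes $Z_i$; sample $T\in\{1,\ldots,m\}$ uniformly; and output $X=(T,Z_2,\ldots,Z_T)$, discarding $Z_{T+1},\ldots,Z_m$. Writing $Y=\cP(B_1)=:Z_1$, since $T$ is independent of the rest,
\[I(A;Y|X)=\frac{1}{m}\sum_{T=1}^m I(A;Z_1|Z_2,\ldots,Z_T).\]
Because $\rho$ is invariant under permutations of the $B_i$'s and the same POVM is applied to each, the joint distribution of $(A,Z_1,\ldots,Z_m)$ is symmetric in the $Z_i$'s; swapping $Z_1\leftrightarrow Z_T$ yields $I(A;Z_1|Z_2,\ldots,Z_T)=I(A;Z_T|Z_1,\ldots,Z_{T-1})$, so the sum telescopes via the chain rule to $I(A;Z_1Z_2\ldots Z_m)\le H(A)\le\log|A|$. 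Dividing by $m$ gives $I(A;Y|X)\le\log|A|/m$.

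The main obstacle is upgrading this $\max_\cP\min_\cQ$ bound to the stated $\min_\cQ\max_\cP$ bound, since the natural $\widetilde\cQ_\cP$ above depends on $\cP$. I would handle this either by a minimax argument exploiting the convexity and compactness of the POVM sets (together with the appropriate continuity/convexity properties of $I(A;Y|X)$), or, more concretely, by replacing the $\cP$-dependent inner measurement with a universal one, e.g.\ an informationally complete POVM $\cR$ applied to each $B_i$ for $i\ge 2$, combined with the random stopping-time $T$. One then uses data processing together with the same chain-rule and symmetry bookkeeping to reduce the $\cP$-dependent quantity $I(A;Y|X)$ to a $\cP$-free one with the same $\log|A|/m$ bound, yielding the claim of the lemma.
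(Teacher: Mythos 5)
Your first step is correct and establishes $\max_{\cP}\min_{\cQ} I(A;Y|X)\le\frac{\log|A|}{m}$: applying the same POVM $\cP$ to every $B_i$, randomizing the "stopping point" $T$, and using permutation symmetry to turn the average of $I(A;Z_1|Z_2\ldots Z_T)$ into $\frac1m I(A;Z_1\ldots Z_m)\le\frac1m\log|A|$ is sound. But, as you yourself flag, the lemma asserts the bound with the quantifiers in the harder order, $\min_{\cQ}\max_{\cP}$, and neither of your proposed upgrades closes this gap. For route (a), the minimax theorem needs the objective to be (quasi-)concave in $\cP$ over a compact convex set; under elementwise mixing of POVM elements (the parametrization that keeps the outcome alphabet, and hence the set, compact) the measured mutual information is \emph{convex} in $\cP$, not concave, while under flagged mixing (where it is affine) the outcome alphabet grows and compactness is lost. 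So Sion's theorem does not apply off the shelf, and making this route rigorous is at least as much work as the lemma itself. Route (b) is worse: conditional mutual information is not monotone under processing of the \emph{conditioning} register, so you cannot replace the conditioning outcomes by those of a universal (e.g.\ informationally complete) POVM "by data processing." The only data-processing step available bounds $\max_\cP I(A;\cP(B_1)|X)$ by $I(A;B_1|X)$ with $B_1$ unmeasured, and that quantity does \emph{not} obey a $\log|A|/m$ monogamy bound --- the fact that the last system must also be measured is precisely the point of this lemma and of the LOCC de Finetti program.

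The paper's proof gets the $\min\max$ order directly by a greedy, adaptive choice that you should compare with: measure $B_1,\ldots,B_m$ one at a time, choosing $\cM_\ell$ to maximize $I(A;Z_\ell|Z_1\ldots Z_{\ell-1})$ \emph{given} the already-fixed $\cM_1,\ldots,\cM_{\ell-1}$. Then each chain-rule term equals $\max_{\cP'}I(A;Y_\ell|X_\ell)$ for the particular conditioning measurement $\cQ'=\cM_1\ox\cdots\ox\cM_{\ell-1}$, hence is at least $\min_{\cQ'}\max_{\cP'}I(A;Y_\ell|X_\ell)$; relaxing the domain of $\cQ'$ and invoking permutation symmetry makes all $m$ lower bounds equal to the quantity in the lemma, and summing against $I(A;Z_1\ldots Z_m)\le\log|A|$ finishes the proof. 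No minimax exchange is needed because the greedy construction exhibits, for some $\ell$, an explicit $\cQ$ whose worst-case $\cP$ already satisfies the bound. I recommend you adopt this adaptive construction in place of your symmetric-but-$\cP$-dependent one.
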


\medskip\noindent
{\bf Applications.}
By replacing the $\LOCCONEP$ (or Bell) measurements in~\cite{Brandao-Harrow12} with measurments
from $\LOCCONE$, we obtain a couple of interesting results as follows, for which technical proofs
are given in the Appendix.

\medskip\noindent
{\em Detecting multipartite entanglement.}
Deciding whether a density matrix is entangled or separable is one of the most basic problem
in quantum information theory, with both
theoretical and practical significance~\cite{Horodeckis07}. Despite the existence of many
entanglement criteria, up to date the only complete ones that detect all entangled states are
infinite hierarchies~\cite{Horodeckis07}. Among them searching for symmetric extensions is
probably the most useful~\cite{DPS}. This is exactly the scenario where quantum de Finetti
theorems could be expected to be useful.

We consider the situation where a small error $\epsilon$ is permitted, meaning that we must
detect all the entangled states except for those very weak ones that are $\epsilon$-close to
separable (at the same time all the separable states should be detected correctly). This is
equivalently formulated as the Weak Membership Problem for separability: given
a state $\rho_{A_1A_2\ldots A_k}$ that is either separable or $\epsilon$-away from any separable
state, we want to decide which is the case. It has been shown that this problem is $\NP$-hard
when $\epsilon$ is of the order no larger than inverse polynomial of local dimensions (in
trace norm)~\cite{Gurvits03, Gharibian08, Beigi08}. Surprisingly, Brand\~{a}o, Christandl
and Yard found a quasipolynomial-time algorithm for constant $\epsilon$ in one-way
LOCC norm for bipartite states~\cite{BCY11}. This algorithm was generalized to
multipartite states in~\cite{Brandao-Christandl11}, then in~\cite{Brandao-Harrow12} using
a stronger method. These algorithms are all based on the searching for symmetric extensions
of~\cite{DPS}. Along these lines, we present the following result, which is obtained by
applying Theorem~\ref{thm:de-Finetti} to bound the distance between properly extendible
states and separable states.

\begin{corollary}
  \label{cor:det-entanglement}
  Testing multipartite entanglement of a state $\rho_{A_1A_2\ldots A_k}$ with error
  $\epsilon$ can be done via searching for symmetric extensions in time
  \begin{equation}
    \label{eq:det-ent-time}
    \exp\left(c\left(\sum_{i=1}^k\log|A_i|\right)^2 k^2f(\epsilon)\right),
  \end{equation}
  where $f(\epsilon)=\epsilon^{-2}$ if the error is measured by the norm $\|\cdot\|_\LOCCONE$
  and $f(\epsilon)=\epsilon^{-1}$ if it is measured by the relative entropy $D_\LOCCONE$.
\end{corollary}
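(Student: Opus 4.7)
The plan is to invoke the Doherty--Parrilo--Spedalieri hierarchy: fix an integer $n$ to be chosen in terms of $\epsilon$, $k$ and the $|A_i|$, and search for an extension $\tilde\rho$ of $\rho_{A_1\ldots A_k}$ to a state on $A_1^{\otimes n}\otimes\cdots\otimes A_k^{\otimes n}$ that is permutation-invariant separately within each block $A_i^{\otimes n}$ and whose marginal on the first copy of each party equals $\rho$. This is a semidefinite feasibility problem whose variable lives in a space of dimension $\prod_i\binom{|A_i|^2+n-1}{n-1}$, hence solvable in time $\exp\!\bigl(O(n\sum_i\log|A_i|)\bigr)$. Completeness is automatic: any separable $\rho=\sum_j p_j\,\sigma_1^j\otimes\cdots\otimes\sigma_k^j$ admits the extension $\sum_j p_j\bigotimes_i(\sigma_i^j)^{\otimes n}$.

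For soundness the goal is to show that the existence of $\tilde\rho$ forces $\rho$ to be $\epsilon$-close to the separable set. I would apply Theorem~\ref{thm:de-Finetti} iteratively, one party at a time. In the first step, regard $\tilde\rho$ as a permutation-invariant state on the $n$ copies of $A_1$ tensored with ancillary registers $A_2\otimes\cdots\otimes A_k$ (accompanied by their untouched $n$-fold blocks). The argument underlying Theorem~\ref{thm:de-Finetti} carries over in the presence of such side information---this is essentially the content of Lemma~\ref{lemma:monogamy} together with the ensemble-decomposition step in the proof of Theorem~\ref{thm:de-Finetti}---and produces an ensemble $\{p_\alpha,\sigma_{A_1}^\alpha\otimes\tau_{A_2\ldots A_k}^\alpha\}$ within $\LOCCONE$ distance $O\!\bigl(\sqrt{\log|A_1|/n}\bigr)$, or relative entropy $O(\log|A_1|/n)$, of $\rho_{A_1\ldots A_k}$. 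Since the measurement used to extract $\alpha$ acts only on the spare copies of $A_1$, each $\tau^\alpha$ still admits a permutation-invariant $n$-extension on the remaining blocks, so the argument recurses on $A_2$, then $A_3$, and so on.

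After $k$ iterations the target is approximated by a genuinely $k$-partite separable ensemble, and the $k$ accumulated errors combine via the triangle inequality for $\|\cdot\|_\LOCCONE$, or via joint convexity and the chain rule for $D_\LOCCONE$, yielding total error $O\!\bigl(k\sqrt{(\max_i\log|A_i|)/n}\bigr)$ and $O\!\bigl(k(\max_i\log|A_i|)/n\bigr)$ respectively. Requiring this to be at most $\epsilon$ determines $n=\Theta\!\bigl(k^2(\max_i\log|A_i|)\,f(\epsilon)\bigr)$, with $f(\epsilon)=\epsilon^{-2}$ or $\epsilon^{-1}$ as appropriate. Plugging this back into the SDP complexity $\exp\!\bigl(O(n\sum_i\log|A_i|)\bigr)$ and using $\max_i\log|A_i|\le\sum_i\log|A_i|$ reproduces the bound $\exp\!\bigl(c(\sum_i\log|A_i|)^2 k^2 f(\epsilon)\bigr)$ asserted in Eq.~(\ref{eq:det-ent-time}).

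The main obstacle I foresee is the soundness step: I must verify carefully that the recursion goes through, meaning (i) the conditional states $\tau^\alpha$ produced by each application of Theorem~\ref{thm:de-Finetti} still carry a symmetric $n$-extension on the remaining parties, and (ii) the error in $\LOCCONE$ or $D_\LOCCONE$ truly accumulates only linearly in $k$. Point (i) requires showing that the monogamy-based measurement on the spare copies of the ``current'' party does not disturb the block structure on the other parties, while point (ii) is essentially a triangle or chain-rule argument once each step is cleanly isolated; neither is deep, but both demand careful bookkeeping of the one-way LOCC measurements invoked at each stage.
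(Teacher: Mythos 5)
Your route is genuinely different from the paper's, and the difference matters. The paper does \emph{not} recurse over the parties: it groups the $\ell$ copies into blocks $\bar A_i:=A_1^iA_2^i\ldots A_k^i$, each containing one copy of \emph{every} party, searches for an extension $\tilde\rho_{\bar A_1\ldots\bar A_\ell}$ that is permutation-invariant over the whole blocks and satisfies the diagonal marginal condition $\tilde\rho_{A_1^1A_2^2\ldots A_k^k}=\rho_{A_1\ldots A_k}$, and then applies Theorem~\ref{thm:de-Finetti} \emph{once}, to the blocks (local dimension $\prod_i|A_i|$, whence the factor $(\sum_i\log|A_i|)$ in the error). Restricting the resulting de Finetti state $\int\sigma_{\bar A}^{\ox k}\,\di\mu(\sigma)$ to the diagonal subsystems $A_1^1,A_2^2,\ldots,A_k^k$ turns each $\sigma^{\ox k}$ into the fully product state $\sigma_{A_1}\ox\ldots\ox\sigma_{A_k}$, i.e.\ directly into a separable approximant, with both the $\|\cdot\|_{\LOCCONE}$ and the $D_{\LOCCONE}$ error inherited verbatim from Theorem~\ref{thm:de-Finetti}. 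There is no error accumulation over parties and nothing left to verify about conditional states retaining extensions.

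The concrete gap in your proposal is the relative-entropy half. Relative entropy satisfies no triangle inequality, so the $k$ per-step errors do not combine ``via the chain rule for $D_{\LOCCONE}$'': writing $D(P\|R)=D(P\|Q)+\sum_z P(z)\log\bigl(Q(z)/R(z)\bigr)$, the cross term with $P\neq Q$ is not bounded by $D(Q\|R)$ and can be arbitrarily large even when $D(P\|Q)$ is small. The exact additive decomposition of Lemma~\ref{lemma:multi-to-bi} is available only because the reference state there is the tensor product of the \emph{true} marginals of a single state; your recursion instead replaces, at each level, the marginal $\tau^\alpha_{A_{i+1}\ldots A_k}$ by a further ensemble decomposition, which destroys that structure. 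So the $f(\epsilon)=\epsilon^{-1}$ claim is not established by your argument. The trace-norm half can be pushed through --- the per-step $\LOCCONE$ bounds hold uniformly over all measurements in the class, hence survive insertion into an adaptive protocol, and the triangle inequality adds them --- but note that the per-step bound supplied by Lemma~\ref{lemma:monogamy} is $(\sum_j\log|A_j|)/n$ (the log-dimension of the \emph{unextended} side), not $\log|A_i|/n$ as you wrote; this happens not to change the final exponent. Since you yourself identify the soundness recursion as the unverified crux, and since it fails as stated for $D_{\LOCCONE}$, the proposal does not yet prove the corollary; the paper's diagonal-embedding reduction to a single use of Theorem~\ref{thm:de-Finetti} is the step you are missing.
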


It is worth mentioning that the run time in Eq.~(\ref{eq:det-ent-time}) is quasipolynomial, 
for constant particle number $k$ and constant error $\epsilon$.
The algorithm in~\cite{Brandao-Christandl11} using $\LOCCONE$-norm behaves exponentially slower
than ours with respect to the number of particles $k$, while the algorithm of~\cite{Brandao-Harrow12}
has the same run-time as ours but works only for $\LOCCONEP$-norm rather than our
$\LOCCONE$-norm approximation. Thus our result has bridged the gap between these two
works. Furthermore, here for the first time we catch the importance
of the \emph{amount of entanglement} in this problem. The quantity $E_r^{\LOCCONE}(\rho):=\min
\{D_\LOCCONE(\rho\|\sigma): \sigma\text{ being separable}\}$, introduced in~\cite{Piani09},
is asymptotically normalized since $E_r^{\LOCCONE}(\Phi_d)=\log (d+1)-1$ for maximally entangled
state $\Phi_d$ of local dimension $d$~\cite{Li-Winter14}. Corollary~\ref{cor:det-entanglement}
shows that, detecting all the $k$-partite entangled states $\rho$ such that $E_r^{\LOCCONE}(\rho)
\geq \epsilon$ can be done in quasi-polynomial time in local dimensions. This is a stronger
statement than using $\LOCCONE$-norm as the error measure.
We point out that for the bipartite case this result can also be obtained by combining the
algorithm of~\cite{BCY11} with the ``commensurate lower bound'' for squashed entanglement
of~\cite{Li-Winter14}.

\medskip\noindent
{\em QMA proof system with multiple proofs.}
$\QMA$, the quantum analogue of the complexity class $\NP$, is the set of decision problems
whose solutions can be efficiently verified on a quantum computer, provided with a
polynomial-size quantum proof~\cite{Watrous08}. In recent years there have been significant
advances on the structure of QMA systems, where multiple \emph{unentangled} proofs
and possibly locally restricted measurements in the verification were considered~\cite{KMT03,
ABDFS08, Harrow-Montanaro10, BCY11, Brandao-Harrow12}. It has been proven that many
natural problems in quantum physics are characterized by QMA proof systems (see, e.g.,
\cite{QMA-physics, LCV07, Chailloux-Sattath11, GHMW13}).

To solve a problem, the verifier performs a quantum algorithm on the input $x\in\{0,1\}^n$
along with the quantum proofs. The algorithm then returns ``yes'' or ``no'' as the answer to the
instance $x$. This procedure of verification can be effectively described as a set
of two-outcome measurements $\{(M_x, \1-M_x)\}_x$ on the proofs. In the definition below,
a problem is formally identified with a ``language''.

\begin{definition}
  \label{def:QMA}
  A language $L$ is in $\QMA^{\M}(k)_{m,c,s}$ if there exists a polynomial-time implementable
  verification $\{(M_x, \1-M_x)\}_x$ with each measurement from the class $\M$ such that
   \begin{itemize}
           \vspace{-1.5mm}
     \item Completeness: If $x\in L$, there exist $k$ states as proofs $\omega_1,\ldots ,\omega_k$,
           each of size $m$ qubits, such that
           \vspace{-2mm}
           \[ \tr\left(M_x(\omega_1\otimes\ldots\otimes\omega_k)\right)\geq c . \]
           \vspace{-8mm}
     \item Soundness: If $x\notin L$, then for any $\omega_1,\ldots ,\omega_k$,
           \vspace{-2mm}
           \[ \tr\left(M_x(\omega_1\otimes\ldots\otimes\omega_k)\right)\leq s . \]
  \end{itemize}
\end{definition}

We are also interested in QMA systems with multiple symmetric proofs. $\SymQMA^{\M}(k)_{m,c,s}$
is defined in a similar way but here we replace independent proofs $\omega_1,\ldots ,\omega_k$
with identical ones $\omega^{\ox k}$ in both completeness and soundness parts. As a convention,
we set $\M$ to be $\ALL$ (the class of all measurements), $m=poly(n)$, $k=1$, $c=2/3$ and $s=1/3$
as defaults~\cite{note2}. We can now state our application of Theorem~\ref{thm:de-Finetti} to
these complexity classes.

\begin{corollary}
  \label{cor:QMA}
  We have
  \begin{equation}
    \label{eq:QMA-reduction}
    \QMA=\QMA^{\LOCCONE}(poly)=\SymQMA^{\LOCCONE}(poly).
  \end{equation}
  In particular,
  \begin{align}
    \label{eq:sym-1locc-QMA}
    &\SymQMA^{\LOCCONE}(k)_{m,c,s}\subseteq \QMA_{0.6m^2k^2\epsilon^{-2},c,s+\epsilon}, \\
    \label{eq:1locc-QMA}
    &\QMA^{\LOCCONE}(k)_{m,c,s}\subseteq \QMA_{0.6m^2k^4\epsilon^{-2},c,s+\epsilon}
  \end{align}
\end{corollary}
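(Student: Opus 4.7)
\textbf{Proof sketch of Corollary~\ref{cor:QMA}.}

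The plan is to first establish (\ref{eq:sym-1locc-QMA}) as a direct application of Theorem~\ref{thm:de-Finetti}, then prove (\ref{eq:1locc-QMA}) by reducing $\QMA^{\LOCCONE}(k)$ to $\SymQMA^{\LOCCONE}(k)$ with an $m \mapsto km$ blow-up and invoking (\ref{eq:sym-1locc-QMA}). The equalities in (\ref{eq:QMA-reduction}) then follow from the trivial containments $\QMA = \SymQMA^{\LOCCONE}(1) \subseteq \SymQMA^{\LOCCONE}(\text{poly}) \subseteq \QMA^{\LOCCONE}(\text{poly})$ together with standard amplification of $(c,s)$ before applying (\ref{eq:1locc-QMA}).

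For (\ref{eq:sym-1locc-QMA}), the new $\QMA$ verifier $V'$ asks for a single proof of $nm$ qubits, viewed as $n$ subsystems of $m$ qubits each, applies a uniformly random permutation of the $n$ subsystems, and then runs the original $\SymQMA^{\LOCCONE}(k)$ measurement $V$ on the first $k$ subsystems, treating them as its $k$ parties. Completeness is immediate: on the honest input $\omega^{\otimes n}$, the reduced state on the first $k$ subsystems is $\omega^{\otimes k}$ and $V$ accepts with probability $\geq c$. For soundness, the random permutation symmetrizes any proof, so Theorem~\ref{thm:de-Finetti}(\ref{eq:de-Finetti-2}) applies; choosing $n$ so that the right-hand side of (\ref{eq:de-Finetti-2}) is at most $2\epsilon$ yields $\big\|\rho_{A_1\ldots A_k}-\int\sigma^{\otimes k}\,\di\mu(\sigma)\big\|_{\LOCCONE} \leq 2\epsilon$. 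The acceptance probability of the $\LOCCONE$ measurement $V$ therefore differs by at most $\epsilon$ from its value on the de Finetti mixture $\int\sigma^{\otimes k}\,\di\mu(\sigma)$, which is itself $\leq s$ by soundness of $V$ (each $\sigma^{\otimes k}$ is a legal symmetric input). Solving the constraint on $n$ gives the proof size $nm \leq 0.6\, m^2 k^2 \epsilon^{-2}$.

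For (\ref{eq:1locc-QMA}) I would first reduce $\QMA^{\LOCCONE}(k)_{m,c,s}$ to $\SymQMA^{\LOCCONE}(k)_{km,c,s}$. The new verifier $\tilde V$ asks each of the $k$ symmetric Merlins for a $km$-qubit state partitioned into $k$ slots of $m$ qubits, extracts the $i$-th slot from the $i$-th Merlin, discards the other slots, and runs the original $V$ on the $k$ extracted registers. The measurement $\tilde V$ lies in $\LOCCONE$ across the $k$ Merlins because each party only touches its own copy and the one-way classical communication order of $V$ is inherited. Completeness: on the honest input $(\omega_1\otimes\ldots\otimes\omega_k)^{\otimes k}$, the extracted registers form exactly $\omega_1\otimes\ldots\otimes\omega_k$. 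Soundness: on any symmetric proof $\omega^{\otimes k}$, the $k$ copies are uncorrelated, so the extracted registers form the product $\omega^{(1)}\otimes\ldots\otimes\omega^{(k)}$ of the single-slot marginals of $\omega$, a legal product input for $V$ whose acceptance is $\leq s$. Applying (\ref{eq:sym-1locc-QMA}) to this reduced class with $m'=km$ yields $0.6(km)^2 k^2 \epsilon^{-2} = 0.6\, m^2 k^4 \epsilon^{-2}$.

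The main conceptual obstacle is the asymmetric case (\ref{eq:1locc-QMA}): one must reduce $\QMA^{\LOCCONE}(k) \to \SymQMA^{\LOCCONE}(k)$ while preserving both the product input structure needed by soundness of $V$ and the one-way LOCC class of the measurement, so that tools such as the swap test or symmetric-subspace projection across Merlins (used in prior product-test reductions) are unavailable. The diagonal-slot trick circumvents this by exploiting that a $\SymQMA$ input is by definition a tensor product of $k$ identical copies: taking one distinct slot per copy yields independent registers without any cross-Merlin operation, and placing $\omega_i$ on slot $i$ of the honest encoding preserves completeness exactly.
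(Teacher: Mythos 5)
Your proposal is correct and takes essentially the same route as the paper: the random-permutation simulation of $\SymQMA^{\LOCCONE}(k)_{m,c,s}$ inside single-proof $\QMA$ via Theorem~\ref{thm:de-Finetti} (with the same parameter choice giving $0.6m^2k^2\epsilon^{-2}$), and the diagonal-slot reduction $\QMA^{\LOCCONE}(k)_{m,c,s}\subseteq\SymQMA^{\LOCCONE}(k)_{km,c,s}$ (the paper attributes this to Lemma 38 of Aaronson et al.) followed by an application of Eq.~(\ref{eq:sym-1locc-QMA}). One small caveat: the link $\SymQMA^{\LOCCONE}(poly)\subseteq\QMA^{\LOCCONE}(poly)$ in your chain of ``trivial'' containments is not actually trivial under the paper's definitions (SymQMA soundness is only demanded against symmetric proofs, so the same verifier need not be sound against arbitrary products), but it is also not needed --- the equalities already follow from $\QMA\subseteq\SymQMA^{\LOCCONE}(poly)$ and $\QMA\subseteq\QMA^{\LOCCONE}(poly)$ (verifier ignores all but one proof) combined with Eqs.~(\ref{eq:sym-1locc-QMA}) and (\ref{eq:1locc-QMA}) and amplification, which is exactly how the paper closes the loop.
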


It has been proven in~\cite{BCY11} that $\QMA=\QMA^{\LOCCONE}(k)$ for constant $k$.
Our result generalizes this statement to a polynomial number of proofs. It is also a
generalization of the results in~\cite{Brandao-thesis,Brandao-Harrow12} which prove the
reduction of $\QMA^{\LO}(k)$ to $\QMA$ ($\LO$ denotes local measurements). On the other
hand, Ref.~\cite{Brandao-Harrow12} proved that, assuming ETH (exponential time
hypothesis for 3-$\SAT$)~\cite{IPZ98}, any multi-prover QMA protocol with symmetric
proofs and Bell verification for 3-$\SAT$, can not bring better than the square-root
reduction of~\cite{Chen-Drucker10} to the proof size. Eq.~(\ref{eq:sym-1locc-QMA})
implies that, this is still true even if \emph{adaptively} local verification (one-way
LOCC measurement) is permitted.

Arguably the biggest open question in the study of QMA proof systems is whether
$\QMA=\QMA(2)$ (note that Harrow and Montanaro have proved that $\QMA(2)=\QMA(k)$ for
any polynomial $k>2$~\cite{Harrow-Montanaro10}). On the one hand, there are natural
problems from quantum physics that are in $\QMA(2)$ but not obviously in
$\QMA$~\cite{LCV07, Chailloux-Sattath11, GHMW13}. On the other hand, Harrow and Montanaro
showed that if the first equality in Eq.~(\ref{eq:QMA-reduction}) holds for a kind of
separable measurements (even only for the case of two proofs), then $\QMA=\QMA(2)$.
Our result here, although does not touch this open question directly, is a step towards
a larger measurement class compared to~\cite{Brandao-Harrow12} and we hope it will
stimulate future progress in solving this open question.

\medskip\noindent
{\em Polynomial optimization over hyperspheres.}
Theorem~\ref{thm:de-Finetti} also gives some improved results on the usefulness of a
general relaxation method, called the Sum-of-Squares (SOS) hierarchy~\cite{Lasserre01, Parrilo00},
for polynomial optimization over hyperspheres (see, e.g.,~\cite{Brandao-Harrow12, BKS14}). The
relevance in physics is that pure states of a quantum system form exactly a hypersphere
and hence some computational problems in quantum physics are indeed to optimize a polynomial
over hyperspheres. See Appendix for the details.

\medskip\noindent
{\bf Discussions.}
The advantage of our method, inherited from~\cite{Brandao-Harrow12}, is that it
tells us more information than that of~\cite{BCY11, Li-Winter14} about the valid de
Finetti (separable) state that approximates the symmetric (extendible) state. As a result,
we obtain a huge improvement over~\cite{Brandao-Christandl11} on the particle-number
dependence, and we are able to strengthen the relation $\QMA=\QMA^{\LOCCONE}(k)$ from the
constant $k$ of~\cite{BCY11} to polynomial $k$. We hope that the de Finetti theorem
presented in this letter will find more applications in the future.

We ask whether Theorem~\ref{thm:de-Finetti} can be further improved, to work for
two-way LOCC or even separable measurements. This would accordingly give stronger
applications, and possibly, solve the $\QMA$ vs $\QMA(2)$ puzzle due to the result
of~\cite{Harrow-Montanaro10}. Another open question is, for a state supported on the
symmetric subspace (aka Bose-symmetric state), whether its reduced states have pure-state
approximations of the form  $\int\varphi^{\ox k}\,\di\mu(\varphi)$ with $\varphi$
\emph{pure} that are not worse than the mixed-state approximations given by our theorem.
We notice that this is indeed the case for the de Finetti theorem of~\cite{CKMR07} and a
similar statement holds for \cite{Renner07}. However, our method, as well as that
of~\cite{Brandao-Harrow12} seems to require that the state $\varphi$ must be generally mixed.

\medskip\noindent
{\bf Acknowledgements.} KL is supported by NSF Grant CCF-1110941 and CCF-1111382.
GS acknowledges NSF Grant CCF-1110941. We thank Charles Bennett, Fernando Brand\~{a}o,
Aram Harrow and John Smolin for interesting discussions, and the anonymous referees
for helping improve the manuscript.

\bigskip
\appendix
{\bf \Large Appendix} \\

\noindent
{\bf Inequivalence of $\| \cdot \|_{\LOCCONE}$ and $\| \cdot \|_{\LOCCONEP}$.}
Here we show the following: for all $d$ there are constant $C$ and $d\times d\times 2$ states
$\rho_{ABC}$ and $\sigma_{ABC}$ such that $\| \rho_{ABC}-\sigma_{ABC} \|_{\LOCCONE} = 2$ but
$\| \rho_{ABC}-\sigma_{ABC} \|_{\LOCCONEP}\leq C/\sqrt{d}$ .

To see this, notice that for states of the form $\rho_{ABC} = \rho_{AB}\otimes \proj{0}$ and
$\sigma_{ABC} = \sigma_{AB}\otimes \proj{0}$ we have
\begin{align*}
  &\| \rho_{ABC} - \sigma_{ABC} \|_{\LOCCONE}= \| \rho_{AB}-\sigma_{AB}\|_{\LOCCONE}, \\
  &\| \rho_{ABC} - \sigma_{ABC} \|_{\LOCCONEP} =\| \rho_{AB}-\sigma_{AB}\|_{\LO},
\end{align*}
where $\LO$ denotes the set of local measurements. We can then apply the existence of
bipartite states with $\| \rho_{AB}-\sigma_{AB}\|_{\LOCCONE} = 2$ and
$\| \rho_{AB}-\sigma_{AB}\|_{\LO} \leq C/\sqrt{d}$ as shown in Theorem 2.4 of~\cite{Aubrun-Lancien14}.

\bigskip

\begin{proof-of}[of Lemma~\ref{lemma:monogamy}]
Let $\rho^{AZ_1\ldots Z_m}:=\id^A\ox\cM_1\ox\ldots\ox\cM_m(\rho^{AB_1\ldots B_m})$, with
$\cM_\ell^{B_\ell\rar Z_\ell}$ being measurement operations. Due to the chain rule of mutual
information,
\begin{equation}\begin{split}
  \label{eq:chain-rule}
  I(A;Z_1\ldots Z_m)=I(A;&Z_1)+I(A;Z_2|Z_1)+\ldots  \\
                         &+I(A;Z_m|Z_1\ldots Z_{m-1}).
\end{split}\end{equation}
Now we fix a special choice of $\cM_\ell\text{'s}$. Let $\cM_1$ maximize $I(A;Z_1)$.
Then under this choice of $\cM_1$, we choose $\cM_2$ that maximizes $I(A;Z_2|Z_1)$. Repeat
this procedure and at last we pick $\cM_m$ that maximizes $I(A;Z_m|Z_1\ldots Z_{m-1})$ under
the previously fixed measurements $\cM_1, \ldots, \cM_{m-1}$. As a result, for each conditional
mutual information we have
\begin{equation}\begin{split}
   \label{eq:proof-monogamy-1}
        &I(A;Z_\ell|Z_1\ldots Z_{\ell-1}) \\
   \geq &\min_{\cQ'}\max_{\cP'}I(A;Y_\ell|X_\ell)_{id^A\ox\cP'\ox\cQ'(\rho^{AB_1\ldots B_\ell})}\,,
\end{split}\end{equation}
where $\cP': B_\ell\rar Y_\ell$ and $\cQ': B_1\ldots B_{\ell-1}\rar X_\ell$ are measurement
operations. Further relax the minimization to allow the measurement $\cQ'$ to be performed on
all the $B$ systems except for $B_\ell$. Then we can set $\ell$ to be $1$ without changing
the value due to the symmetry of the state. Thus Eq.~(\ref{eq:proof-monogamy-1}) is further
lower bounded by
\[
  \min_{\cQ}\max_{\cP}I(A;Y|X)_{id^A\ox\cP\ox\cQ(\rho^{AB_1\ldots B_m})}
\]
with measurement operations $\cP: B_1\rar Y$ and $\cQ: B_2\ldots B_m\rar X$. This, combined with
Eq.~(\ref{eq:chain-rule}) lets us conclude that
\[\begin{split}
  \log|A| &\geq I(A;Z_1\ldots Z_m) \\
          &\geq m \min_{\cQ}\max_{\cP}I(A;Y|X)_{id^A\ox\cP\ox\cQ(\rho^{AB_1\ldots B_m})}\,,
\end{split}\]
and we are done.
\end{proof-of}

\bigskip

\begin{proof-of}[of Corollary~\ref{cor:det-entanglement}]
We first prove this result with the error measured by the fully one-way LOCC norm. Then
we explain that a slight adaptation works for the case of relative entropy.

Let $\ell\geq k$ be an integer. Introduce quantum systems $\bar{A}_i:=A_1^iA_2^i\ldots
A_k^i$ with $A_j^i\cong A_j$, for all $0<i\leq\ell$ and $0<j\leq k$. We search for an
state $\tilde{\rho}_{\bar{A}_1\ldots\bar{A}_\ell}$ such that $\tilde{\rho}_{\bar{A}_1
\ldots\bar{A}_\ell}$ is permutation-invariant and $\tilde{\rho}_{A_1^1A_2^2\ldots A_k^k}
=\rho_{A_1A_2\ldots A_k}$. If such a state exists, we feed back ``separable''. Otherwise
we conclude that it is entangled. For our purpose we set $\ell=2(k-1)^2\epsilon^{-2}\sum_i
\ln|A_i|+k$.

This search can be done using semidefinite programming in time polynomial of the total
dimension of $\tilde{\rho}_{\bar{A}_1\ldots\bar{A}_\ell}$, which coincides with
the claimed result.

To analyze the correctness, first assume that such an extension exists. Then we apply
Theorem~\ref{thm:de-Finetti} to see that there is certain probability measure $\mu$ such
that
\[\left\|\tilde{\rho}_{\bar{A}_1\ldots\bar{A}_k}-\int\sigma^{\ox k}\,\di\mu(\sigma)
            \right\|_{\LOCCONE}\leq\epsilon.\]
By definition, if we restrict the measurement to be performed only on systems
$A_1^1, A_2^2,\ldots , A_k^k$ then the above inequality implies that there exists a separable
state $\sigma_{A_1\ldots A_k}$ such that $\left\|\rho_{A_1\ldots A_k}-\sigma_{A_1\ldots
A_k}\right\|_\LOCCONE\leq\epsilon$. So if $\rho_{A_1\ldots A_k}$ is $\epsilon$-away from
any separable state, the required extension can not exist. But if $\rho_{A_1\ldots A_k}$ is
separable, it is obvious that such an extension does exist. As a result in both cases the
above procedure works correctly.

The above argument works as well if the error is measured by the relative entropy. The
small modification needed is just to replace $\|\cdot\|_\LOCCONE$ by $D_\LOCCONE$ and
here we set $\ell=(k-1)^2\epsilon^{-1}\sum_i\log|A_i|+k$.
\end{proof-of}

\bigskip
\noindent
{\bf Closeness of extendible states to being separable.} We also show how an extendible
multipartite state is close to the set of separable states, under fully one-way LOCC
distinguishability. A state $\rho_{A_1\ldots A_k}$ is $\ell$-extendible if there is an
extension $\tilde{\rho}_{\bar{A}_1\ldots\bar{A}_\ell}$  with $\bar{A}_i:=A_1^iA_2^i\ldots
A_k^i$, such that for all $0<j\leq k$ the state $\tilde{\rho}_{\bar{A}_1\ldots\bar{A}_\ell}$
is invariant under any permutations over subsystems $A_j^1, A_j^2,\ldots, A_j^\ell$ and
for any $0<i_1,\ldots, i_k\leq\ell$ we have $\rho_{A_1\ldots A_k}=\tilde{\rho}_{A_1^{i_1}\ldots A_k^{i_k}}$.
Obviously $\tilde{\rho}_{\bar{A}_1\ldots\bar{A}_\ell}$ is permutation-invariant and
$\tilde{\rho}_{A_1^1A_2^2\ldots A_k^k}=\rho_{A_1A_2\ldots A_k}$. So similar to the
argument in the proof of Corollary 4, a use of Theorem 1 lets us obtain:
\begin{align*}
  E_r^{\LOCCONE}(\rho_{A_1\ldots A_k}) &\leq \frac{(k-1)^2\sum_i\log|A_i|}{\ell-k},\\
  \min_{\sigma\in\sep}\left\|\rho_{A_1\ldots A_k}-\sigma_{A_1\ldots A_k}\right\|_\LOCCONE
      &\leq \sqrt{\frac{2(k-1)^2\sum_i\ln|A_i|}{\ell-k}}
\end{align*}
holds for any $\ell$-extendible state $\rho_{A_1\ldots A_k}$. Here $\sep$ denotes the
set of all separable states.

\bigskip

\begin{proof-of}[of Corollary~\ref{cor:QMA}]
Restricting the verification to be performed on the first proof in the multi-prover
protocols, we see that
\begin{align}
  \label{eq:QMA-proof-1}
  &\QMA_{m,c,s}\subseteq\SymQMA^{\LOCCONE}(k)_{m,c,s}\, ,     \\
  \label{eq:QMA-proof-2}
  &\QMA_{m,c,s}\subseteq\QMA^{\LOCCONE}(k)_{m,c,s}\, .
\end{align}
By definition, Eq.~(\ref{eq:QMA-proof-1}) and Eq.~(\ref{eq:sym-1locc-QMA}) imply
$\SymQMA^\LOCCONE(poly)=\QMA$. Similarly, Eq.~(\ref{eq:QMA-proof-2}) and Eq.~(\ref{eq:1locc-QMA})
imply $\QMA^\LOCCONE(poly)=\QMA$. Note that we can use the amplification of $\QMA_{m,c,s}$
(see~\cite{Marriott-Watrous05}) to keep $c=2/3$ and $s=1/3$.

To prove Eq.~(\ref{eq:sym-1locc-QMA}), we show a way of simulating a $\SymQMA^{\LOCCONE}
(k)_{m,c,s}$ protocol in a single-proof $\QMA$ system. The prover provides the verifier
with a proof of size $0.6m^2k^2\epsilon^{-2}$, which consists of $\ell=0.6mk^2\epsilon^{-2}$
subsystems each of size $m$ qubits. Then the verifier makes a uniformly random
permutation over the subsystems and then performs the $\SymQMA^{\LOCCONE}(k)_{m,c,s}$
verification on the first $k$ (denoted as $A_1, A_2, \ldots, A_k$) of them. No matter what
the initial state $\rho_{\text{initial}}$ of the proof is, Theorem~\ref{thm:de-Finetti}
implies that the state on $A_1\ldots A_k$, $\rho_{A_1\ldots A_k}$, can be approximated
as $\left\|\rho_{A_1\ldots A_k}-\int\sigma^{\ox k}\,\di\mu(\sigma)\right\|_{\LOCCONE}\leq
2\epsilon$ with certain probability measure $\mu$. Let $\{(M_x, \1-M_x)\}_x$ be the one-way
LOCC measurements in the $\SymQMA^{\LOCCONE}(k)_{m,c,s}$ protocol for a language $L$. Then
the soundness constant $s^\prime$ in this simulation can be
\[
  \begin{split}
  s^\prime & =   \max_{x\notin L}\max_{\rho_{\text{initial}}}\tr M_x\rho_{A_1\ldots A_k}  \\
           &\leq \max_{x\notin L}\max_\mu \int\di\mu(\sigma)\,\tr M_x\sigma^{\ox k}+\epsilon\\
           &\leq s+\epsilon.
  \end{split}
\]
One the other hand, suppose the proof for an accepted instance in $\SymQMA^{\LOCCONE}(k)
_{m,c,s}$ is $\omega^{\ox k}$. Then in the simulation the state $\omega^{\ox \ell}$
gives the same probability of acceptance. So completeness does not change.

For Eq.~(\ref{eq:1locc-QMA}), we will prove
\[\QMA^{\LOCCONE}(k)_{m,c,s} \subseteq \SymQMA^{\LOCCONE}(k)_{km,c,s}\, . \]
This, together with Eq.~(\ref{eq:sym-1locc-QMA}), leads to Eq.~(\ref{eq:1locc-QMA}).
The argument is similar to that in~\cite{ABDFS08} (Lemma 38), where the same
relation with ``$\LOCCONE$'' replaced by ``$\ALL$'' was proved. The strategy is to
divide each proof in the $\SymQMA^{\LOCCONE}(k)_{km,c,s}$ system into $k$ subsystems
of $m$ qubits, then simulate the $\QMA^{\LOCCONE}(k)_{m,c,s}$ protocol on the
$i\text{th}$ subsystem from the $i\text{th}$ proof for all $i=1,\ldots , k$.
\end{proof-of}

\bigskip\noindent
{\bf Polynomial optimization over hyperspheres.}
An immediate consequence of Theorem~\ref{thm:de-Finetti} is that we
can enlarge in~\cite{Brandao-Harrow12} the class of polynomials, for which the
optimization over multiple hyperspheres admits efficient SOS approximation. We also
provide another class of polynomials whose optimization over the single hypersphere
has a similar feature, supplementing a result of~\cite{BKS14} on polynomials
with nonnegative coefficients.

We use a $d$-dimensional complex vector to encode $2d$ real variables.

\begin{corollary}
  \label{cor:P-Optim}
  For $1\leq i\leq k$, let $A_i$ and $A_i^\prime$ be identical quantum systems of
  dimension $d$. Let $\ket{\alpha_i}\in\cH_{A_i}$ and $\ket{\beta}^{\ox k} \in
  \cH_{A_1A_1^\prime}\ox\ldots\ox\cH_{A_kA_k^\prime}$ be complex vectors. Let
  $0\leq M\leq\1$ be a matrix on  $\cH_{A_1\ldots A_k}$ such that $\{M, \1-M\}\in
  \LOCCONE$. The two optimizations
  \begin{align*}
  (\cO1:)\ &\max\  \bra{\alpha_1}\ox\ldots\ox\bra{\alpha_k}M\ket{\alpha_1}\ox\ldots\ox\ket{\alpha_k} \\
  &\textup{subject to }\  \braket{\alpha_i}{\alpha_i}=1, i=1,\ldots,k                          \\
  (\cO2:)\ &\max\  \bra{\beta}^{\ox k}(\1 \ox M) \ket{\beta}^{\ox k} \textup{  subject to  }\braket{\beta}{\beta}=1
  \end{align*}
  can be solved to within additive error $\epsilon$ efficiently, via a hierarchy
  of SDP relaxations (SOS), respectively in time $\exp(O(\epsilon^{-2}k^4\log^2(d))$
  and $\exp(O(\epsilon^{-2}k^2\log^2(d))$.
\end{corollary}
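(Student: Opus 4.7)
The plan is to rewrite each optimization as a maximization of $\tr(M\omega)$ over a structured family of states, and then to identify the level-$\ell$ SOS relaxation with a semidefinite program searching over symmetric extensions of the marginal on $A_1\ldots A_k$. For $\cO1$, with $\rho_i=\proj{\alpha_i}$ the objective becomes $\tr(M\,\rho_1\ox\cdots\ox\rho_k)$, whose pure-product maximum coincides with $\max_{\sigma\in\sep}\tr(M\sigma)$ by linearity of $\tr(M\cdot)$. For $\cO2$, setting $\rho:=\tr_{A'}\proj{\beta}$ gives $\bra{\beta}^{\ox k}(\1\ox M)\ket{\beta}^{\ox k}=\tr(M\rho^{\ox k})$, and as $\ket{\beta}$ ranges over unit vectors $\rho$ ranges over all density matrices on $A$ (by purification), so the problem becomes $\max_\rho\tr(M\rho^{\ox k})$.

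Next I would set up the two SDPs. For $\cO1$, following the Appendix's ``Closeness of extendible states'' construction, I would search over multipartite extensions $\tilde\rho_{\bar A_1\ldots\bar A_\ell}$ that are invariant under permutations within each of the $k$ coordinates, testing $\tilde\rho_{A_1^1A_2^2\ldots A_k^k}$ against $M$. For $\cO2$, I would search over permutation-invariant extensions $\tilde\rho_{A_1\ldots A_\ell}$ on $\cH_A^{\ox\ell}$ and test the $k$-fold marginal $\tilde\rho_{A_1\ldots A_k}$ against $M$. In each case the SDP variable lives in dimension $d^{k\ell}$ (respectively $d^\ell$), and the program is solvable in time $\exp(O(k\ell\log d))$ (respectively $\exp(O(\ell\log d))$).

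Correctness will rest on the elementary observation that $\{M,\1-M\}\in\LOCCONE$ implies $|\tr M(\omega_1-\omega_2)|\le\tfrac12\|\omega_1-\omega_2\|_{\LOCCONE}$, so any $\|\cdot\|_{\LOCCONE}$ bound translates directly into an error bound on the objective. Applying Theorem~\ref{thm:de-Finetti} in the $\cO2$ case (and its multipartite version from the Appendix in the $\cO1$ case) produces, from any SDP-feasible $\tilde\rho_{A_1\ldots A_k}$, a de Finetti or separable approximation within $\LOCCONE$-distance $\sqrt{2(k-1)^2\ln d/(\ell-k)}$ (with an extra factor $k$ under the square root in the $\cO1$ case, since $\sum_i\ln|A_i|=k\ln d$), and by convexity the de Finetti upper bound is attained at a single $\sigma^{\ox k}$. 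Choosing $\ell-k=\Theta(k^2\log d/\epsilon^2)$ for $\cO2$ and $\ell-k=\Theta(k^3\log d/\epsilon^2)$ for $\cO1$ pushes the additive value gap below $\epsilon$; substitution into the runtime bounds above gives the advertised $\exp(O(\epsilon^{-2}k^2\log^2 d))$ and $\exp(O(\epsilon^{-2}k^4\log^2 d))$. The main conceptual point --- and the reason the argument gives a nontrivial bound despite $d$ being arbitrary --- is that Theorem~\ref{thm:de-Finetti} controls the approximation specifically in $\|\cdot\|_{\LOCCONE}$, the norm to which $\{M,\1-M\}$ is sensitive by hypothesis; the trace-norm de Finetti theorem of~\cite{CKMR07}, whose error scales polynomially in $d$, would be useless here, and dropping the $\LOCCONE$ restriction on $M$ would destroy the $\log d$-type dependence in the runtime.
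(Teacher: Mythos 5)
Your proposal is correct and follows essentially the same route as the paper: reduce $\cO2$ to $\max_\sigma\tr M\sigma^{\ox k}$ via the reduced state of $\ket{\beta}$, sandwich the value between the level-$\ell$ symmetric-extension SDP and that SDP minus the $\LOCCONE$ de Finetti error from Theorem~\ref{thm:de-Finetti}, and choose $\ell=\Theta(k^2\epsilon^{-2}\log d)$ (respectively $\ell=\Theta(k^3\epsilon^{-2}\log d)$ on blocks of dimension $d^k$ for $\cO1$). The only cosmetic difference is that the paper handles $\cO1$ by citing the analysis of~\cite{Brandao-Harrow12} with Theorem~\ref{thm:de-Finetti} substituted in, whereas you spell out the same argument explicitly via the extendibility-to-separability bound of the Appendix.
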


The advantage of Corollary~\ref{cor:P-Optim} is that, for constant $\epsilon$ and $k$, the
runtime of these two optimizations is only quasi-polynomial of the number of variables, instead
of exponential time of exhaustive search.

\medskip
\begin{proof}
The analysis of $\cO1$ is the same as that of the polynomial-optimization problem considered
in~\cite{Brandao-Harrow12}. We only need to employ our Theorem~\ref{thm:de-Finetti} when the
de Finetti theorem is used.

It is easy to see that the maximum in $\cO2$ equals
\[\max_\sigma \tr M \sigma^{\ox k},\]
with $\sigma$ a normalized quantum state. This, in turn, is bounded as
\[\begin{split}
       &\max_{\rho_{A_1\ldots A_\ell}}\tr(M\ox\1)\rho_{A_1\ldots A_\ell}-\sqrt{\frac{(k-1)^2\ln d}{2(\ell-k)}} \\
  \leq &\ \max_\sigma\tr M \sigma^{\ox k} \\
  \leq &\max_{\rho_{A_1\ldots A_\ell}}\tr(M\ox\1)\rho_{A_1\ldots A_\ell},
\end{split}\]
where $\ell\geq k$ and the maximization in the first and last lines are over
\emph{permutation-invariant} state $\rho_{A_1\ldots A_\ell}$. Note that here the
first inequality follows from a direct application of Theorem~\ref{thm:de-Finetti},
and the second inequality is by restricting the maximization in the last line to
over $\ell$-fold states of the form $\sigma^{\otimes\ell}$.  So the problem $\cO2$
can be approximated to within additive error $\sqrt{\frac{(k-1)^2\ln d}{2(\ell-k)}}$,
by the lever-$\ell$ SDP hierarchy (SOS hierarchy)
\[\begin{split}
\textup{maximize }  \ &\tr(M\ox\1)\rho_{A_1\ldots A_\ell} \\
\textup{subject to }\ &\rho_{A_1\ldots A_\ell}\geq 0, \tr\rho_{A_1\ldots A_\ell}=1, \\
                      &\rho_{A_1\ldots A_\ell} \text{being permutation-invariant}.
\end{split}\]
This can be done in time $\exp(O(\ell\log d)$, namely, polynomial of the dimension
of $\rho_{A_1\ldots A_\ell}$. At last, to obtain the claimed result, we choose
$\ell=\frac{1}{2}\epsilon^{-2}(k-1)^2\ln d +k$.
\end{proof}

\end{document}